\documentclass{llncs}
\usepackage{hyperref}
\usepackage{listings}
\usepackage{xcolor}
\usepackage{amsmath,amssymb}
\usepackage[T1]{fontenc}
\usepackage{booktabs}                 
\usepackage{multirow}                 
\usepackage{graphicx}                 

\makeatletter
\let\@oldthebibliography\thebibliography
\renewcommand{\thebibliography}[1]{%
  \@oldthebibliography{#1}%
  \setlength{\itemsep}{0pt}\setlength{\parsep}{0pt}\setlength{\parskip}{0pt}}
\makeatother
\usepackage{algorithm}                
\usepackage{algpseudocode}            
\usepackage{cleveref}

\newif\ifextended
\extendedtrue

\newcommand{\artefacturl}{\url{https://github.com/PRETgroup/rv-artifact-threevalued-observers}}

\usepackage{fontawesome}

\usepackage{comment}
\newif\ifshowsuggestions
\showsuggestionsfalse
\definecolor{suggred}{RGB}{225,0,0}
\ifshowsuggestions
  \newcommand{\suggest}[1]{{\color{suggred}#1}}
  \specialcomment{suggestion}{\begingroup\color{suggred}}{\endgroup}
\else
  \newcommand{\suggest}[1]{}
  \excludecomment{suggestion}
\fi

\makeatletter
\@ifundefined{definition}{\spnewtheorem{definition}{Definition}{\bfseries}{\itshape}}{}
\@ifundefined{lemma}{\spnewtheorem{lemma}{Lemma}{\bfseries}{\itshape}}{}
\@ifundefined{corollary}{\spnewtheorem{corollary}{Corollary}{\bfseries}{\itshape}}{}
\@ifundefined{proposition}{\spnewtheorem{proposition}{Proposition}{\bfseries}{\itshape}}{}
\@ifundefined{property}{\spnewtheorem{property}{Property}{\bfseries}{\itshape}}{}
\@ifundefined{example}{\spnewtheorem{example}{Example}{\bfseries}{\rmfamily}}{}
\@ifundefined{remark}{\spnewtheorem{remark}{Remark}{\itshape}{\rmfamily}}{}
\makeatother
\providecommand{\llbracket}{[\![}
\providecommand{\rrbracket}{]\!]}

\lstdefinestyle{lustre}{
  basicstyle=\ttfamily\fontsize{8.5pt}{9.5pt}\selectfont,
  keywordstyle=\bfseries,
  commentstyle=\color{green!60!black},
  morecomment=[l]{--},
  morekeywords={node,returns,var,let,tel,true,false,pre,and,or,not,if,then,else,int,bool},
  frame=single,
  breaklines=true,
  postbreak=\mbox{\textcolor{red}{$\hookrightarrow$}\space},
}

\title{Synchronous Observers Revisited for Runtime Verification of Lustre Using STL}
\titlerunning{Synchronous Observers Revisited for RV of Lustre Using STL}
\authorrunning{L. Kenwright et al.}

\author{Logan Kenwright\inst{1} \and
	Partha Roop\inst{1} \and
	Sobhan Chatterjee\inst{1} \and
	Nathan Allen\inst{2}
}

\institute{The University of Auckland, Auckland, New Zealand\\
	\email{\{logan.kenwright, p.roop, sobhan.chatterjee\}@auckland.ac.nz} \and
	Auckland University of Technology, Auckland, New Zealand\\
	\email{nathan.allen@aut.ac.nz}
}

\begin{document}

\maketitle

\begin{abstract}
Signal Temporal Logic (STL) is a popular formalism for the temporal safety properties of cyber-physical systems, most often used for runtime verification. In the synchronous family of languages, safety properties are instead expressed as synchronous observers, modules composed with a program for static verification, which are also runnable specifications suitable for runtime verification, though this use is rarely explored. We present a technique for compiling the synchronous fragment of STL (SSTL) into synchronous observers in the dataflow language Lustre. Unlike previous work, we allow arbitrary nesting of bounded SSTL properties via modular compilation, and admit a globally unbounded outer operator for online monitoring; the resulting observers serve both runtime verification and, as a by-product, static verification with the Kind2 model checker. We further contribute an interactive visualiser that renders a property's three-valued verdict over an editable trace, and evaluate on two case studies from the literature: a spring-mass system and a car-following cruise controller.

\keywords{Runtime verification \and Signal Temporal Logic \and
Synchronous observers \and Lustre \and Three-valued semantics \and
Cyber-physical systems}
\end{abstract}

\ifextended
\noindent\textit{This is the extended version of a paper accepted at Runtime
Verification (RV 2026). It is identical to the proceedings version except for
\Cref{app:proofs}, which contains the proofs omitted there for space. The final
authenticated version will appear in the Springer LNCS proceedings.}
\fi

\section{Introduction}
\label{sec:introduction}
Many cyber-physical systems are safety-critical, so it is of great interest to specify
their safety properties formally. One popular tool for this is Signal Temporal Logic
(STL), whose main feature is the ability to specify time intervals under which a
condition is expected to hold. STL is most commonly applied for \textit{runtime
verification}, where properties are evaluated during the actual deployment of a system
and a violation may trigger a halt or other corrective action. This is convenient for
improving the safety of complex black-box systems without needing a formal model of the
system itself.

Existing STL tooling largely monitors pre-recorded or simulated signals, or operates on abstract automaton models, rather than on the executable code that is actually deployed; there is
consequently no executable, runtime-equivalent semantics of \emph{nested} STL that
runs in lock-step with the system it monitors. We close this gap by synthesising the property directly into a synchronous observer in the same language as the system, in this work using the dataflow language \textit{Lustre}. Because the observer shares the system's formal semantics, the very same artefact serves two purposes: it monitors the system online, and as a bonus can be discharged by a model checker before deployment, with no separate model to build.

We make three contributions:
\begin{enumerate}
  \item \textbf{Online monitoring of a truly executable CPS.} The observer is compiled
    together with the system in the same synchronous language and runs in lock-step with
    it, rather than over pre-recorded or simulated signals as most existing STL tooling
    does (\Cref{sec:related}).
  \item \textbf{A three-valued streaming semantics for runtime.} Causal Kleene verdicts
    (\textsc{def\_true}/\textsc{def\_false}/\textsc{unknown}) let the monitor commit to a
    definitive answer as early as the trace allows, ahead of the formula horizon.
  \item \textbf{A visualiser for nested STL formulae.} An interactive, browser-based tool
    renders a nested observer's three-valued verdict over an editable trace, making the
    otherwise unintuitive behaviour of nested operators legible.
\end{enumerate}

The synthesiser, the visualiser, and both case studies are available as an
open-source artefact.\footnote{\artefacturl}

The closest prior work is Bellanger et al.~\cite{bellanger2025formally}, who implement
non-nested bounded STL operators as synchronous observers. Relative to them, our
technique is novel in three ways:

\begin{itemize}
    \item We allow arbitrary nesting of bounded SSTL operators, enabling the specification of more complex properties
    \item Where Bellanger et al.\ orient their observers toward static
    verification as Kind2 contracts, ours are language-native executable
    specifications: the same artefact runs as an online monitor at runtime, and as a
    by-product remains amenable to static verification in Kind2.
    \item We enable the outer operator to be the \textit{Unbounded Globally} operator, for monitoring free-running systems.
\end{itemize}
The remainder of the paper is structured as follows: Sect.~\ref{sec:background} gives preliminaries on synchronous observers and synchronous signal temporal logic. Sect.~\ref{sec:casestudy} introduces two case studies with intuitive properties. Sect.~\ref{sec:synthesis} describes our observer synthesis technique, and Sect.~\ref{sec:shift} extends it to the unbounded globally operator. Sect.~\ref{sec:visualiser} presents the interactive visualiser for nested observers. Sect.~\ref{sec:evaluation} evaluates our case studies using this technique, Sect.~\ref{sec:related} surveys related work, and Sect.~\ref{sec:conclusion} concludes.

\section{Background}
\label{sec:background}

\subsection{Synchronous Signal Temporal Logic}

Signal temporal logic~\cite{donze2013signal,maler2004monitoring,bartocci_specification-based_2018} is a formal language used to describe temporal properties over predicates on a set of real-valued signals. We write $s$ for such a signal and $t\in\mathbb{R}_{\ge0}$ for a time instant, so that $(s,t)\models\phi$ denotes that $\phi$ holds for $s$ at $t$. An atomic predicate $p$ is a Boolean condition on the signal value, for example $\mathit{gap}\ge6$. The available operators are as described in \Cref{def:stl-syntax}, then their semantics in \Cref{def:stl-semantics}.

\begin{definition}[STL Syntax]\label{def:stl-syntax}
\[
  \phi ::= p \mid \neg\phi \mid \phi \wedge \phi \mid
  \square_{[a,b]}\phi \mid \diamondsuit_{[a,b]}\phi \mid
  \psi\,\mathbf{U}_{[a,b]}\,\phi
\]
where $p$ is an atomic predicate over signal values and $[a,b]$ is a bounded time interval. $\square_{[a,b]}$ (always) declares that a predicate must remain true for the inclusive time interval, and $\diamondsuit_{[a,b]}$ (eventually) dictates that a predicate must become true at least once in the inclusive time interval. Negation and conjunction have their standard mathematical definitions. $\psi\,\mathbf{U}_{[a,b]}\,\phi$ (until) declares that the precondition $\psi$ must hold continuously until the postcondition $\phi$ is satisfied.
\end{definition}

\begin{definition}[STL Boolean Semantics]\label{def:stl-semantics}
  \begin{itemize}
    \item $\square_{[a,b]}\phi$: $\forall_{t'\in[t{+}a,t{+}b]} (s,t')\models\phi$.
    \item $\diamondsuit_{[a,b]}\phi$: $\exists_{t'\in[t{+}a,t{+}b]} (s,t')\models\phi$.
    \item $\psi\,\mathbf{U}_{[a,b]}\,\phi$: $\exists_{t'\in[t{+}a,t{+}b]} \left( (s,t')\models\phi \wedge \forall_{t'' \in [t,t')} (s,t'')\models\psi \right)$.
  \end{itemize}
\end{definition}

Moreover, in this work we are concerned only with the discrete projection of a continuous signal, divided into the logical ticks of execution which synchronous programs use. This is called Synchronous Signal Temporal Logic (SSTL)~\cite{roop2026synchronoussignaltemporallogic}.

\begin{definition}[Synchronous STL~\cite{roop2026synchronoussignaltemporallogic}]\label{def:sstl}
SSTL is the discrete-time abstraction of STL in which the signal is sampled at fixed
steps called \emph{ticks}, so time ranges over $\mathbb{N}$ rather than
$\mathbb{R}_{\ge0}$. Its syntax is that of \Cref{def:stl-syntax} with every interval
bound $[a,b]$ an integer number of ticks, and its semantics is that of
\Cref{def:stl-semantics} with the real time $t$ replaced by the tick index $n$; each
atomic predicate is thus a Boolean test on the sampled signal, such as
$\mathit{safe\_gap}\equiv(\mathit{gap}\ge6)$. Under certain robustness conditions, SSTL and STL agree on satisfaction, making SSTL a sound and complete discrete abstraction of STL.
\end{definition}

To make these operators concrete, we draw on the adaptive cruise control (ACC) case
study used throughout this paper: a follower vehicle regulates its distance to a lead
vehicle that brakes to a stop. Let $\mathit{gap}$ be the inter-vehicle distance and
define the predicate $\mathit{sg}\equiv(\mathit{gap}\ge 6\,\mathrm{m})$, a safe
following gap. Single operators already express useful requirements:
$\square_{[0,b]}\,\mathit{sg}$ asks that the gap stay safe across an interval of $b$
ticks, and $\diamondsuit_{[1,4]}\,\mathit{sg}$ that, if the gap becomes unsafe, it
recovers to safe within four ticks. Composing them,
$\square_{[2,6]}\bigl(\diamondsuit_{[1,4]}\,\mathit{sg}\bigr)$ asks that at every tick
of a start-up window the gap recovers within four ticks.

The relationship between a true continuous signal $s[t]$  and the sampled trace $\hat{s}[n]$ actually available to a monitor, is already well handled in the literature through sampling theory and the concept of \textit{robustness}, and is outside the scope of this paper.

\subsection{Synchronous Languages and Lustre}

Synchronous languages divide computation
into discrete instants, called logical
ticks. Of the classical members~\cite{berry1992esterel,leguernic1991signal,halbwachs1991lustre}, the dataflow
language Lustre sees the most use today, particularly through its industrial variant
SCADE~\cite{scade} in aviation. Their strong formal semantics make them amenable to static
verification and to deterministic execution. As cyber-physical systems they rely on
the \textit{synchrony hypothesis}: if a tick executes faster than the inter-arrival
time of environmental signals, it may be treated as instantaneous, so the effect of
physical time is abstracted away. A Lustre node is a set of equations evolving over
ticks; for example, the lead vehicle of our case study cruises at
$20\,\mathrm{m/s}$ and then brakes at $-8\,\mathrm{m/s^2}$ from tick~$6$ until it
stops, integrating its motion over ticks of $dt = 0.5\,\mathrm{s}$:

\noindent\begin{minipage}{\linewidth}
\begin{lstlisting}[style=lustre]
node lead_vehicle () returns (pos1, vel1: real);
var n: int; a1: real;
let
  n    = 0 -> pre(n) + 1;
  a1   = if n >= 6 and (20.0 -> pre(vel1)) > 0.0 then -8.0 else 0.0;
  vel1 = 20.0 -> if pre(vel1) + a1 * 0.5 < 0.0 then 0.0
                 else pre(vel1) + a1 * 0.5;
  pos1 = 0.0 -> pre(pos1) + pre(vel1) * 0.5;
tel
\end{lstlisting}
\end{minipage}
Here \texttt{vel1} and \texttt{pos1} are the outputs and \texttt{n}, \texttt{a1}
internal variables. The \texttt{->} operator gives a stream's value on the first tick
and switches to its right-hand side thereafter, while \texttt{pre} reads the
previous-tick value.

\subsection{Synchronous Programs and Observers}
Most synchronous programs can be formalised as a Mealy machine, following the
definition used by Argos~\cite{maraninchi2001argos}.

\begin{definition}[Synchronous Program]\label{def:sync-program}
A synchronous program $P$ is a Mealy machine $(Q, q_0, I, O, \delta)$ with states
$Q$, initial state $q_0$, finite sets of input and output variables $I$ and
$O$, and transition function $\delta: Q \times \mathrm{Val}(I) \to Q \times
\mathrm{Val}(O)$, where $\mathrm{Val}(X)$ is the set of valuations of the variables $X$. At each tick it reads one or more inputs,
$\sigma\in\mathrm{Val}(I)$, updates its state by $\delta$, and emits outputs $o\in\mathrm{Val}(O)$.
\end{definition}

An observer is another synchronous program composed in parallel with $P$, watching
its inputs and outputs and emitting an alarm when a condition is met; the alarm is
consumed by a runtime monitor or model checker.

\begin{definition}[Synchronous Observer]\label{def:sync-observer}
  A synchronous observer of $P$ is a synchronous program $P_\phi = (Q_\phi,
  q_{\phi,0}, \Sigma, O_\phi, \delta_\phi)$ with $\Sigma = I \cup O$ and
  $O_\phi = \{alarm, \bot\}$. The executable system is the parallel composition
  $P \,||\, P_\phi$.
\end{definition}

\subsection{Three-Valued (Kleene) Streaming Semantics}
\label{sec:three-valued}

An online observer cannot always decide a property at the tick where it is anchored,
because SSTL operators are forward-looking. Consider $\square_{[0,5]}p$: a single false
value, say at $t{+}1$, settles it false at once, but while $p$ has held and the window
is still open, no verdict is justified until tick $t{+}5$. This delay affects every
bounded operator and compounds under nesting, where an inner subformula pushes the
horizon beyond the outer operator's. We therefore adopt a three-valued streaming
semantics in the style of LTL$_3$~\cite{bauer2011runtime}: at each tick the observer
emits \textsc{def\_true}, \textsc{def\_false}, or \textsc{unknown}, the last meaning the
property cannot yet be decided from the prefix observed so far.

Returning to the ACC property
$\square_{[2,6]}\bigl(\diamondsuit_{[1,4]}\,\mathit{sg}\bigr)$ of
\Cref{sec:background}, synthesising our observer and running it on a simulated braking
trace (via the Lustre compiler \texttt{lv6}) yields a verdict that stays
\textsc{unknown} until tick~$7$, where it settles \textsc{def\_true}, definitive
before the formula's horizon of ten ticks. We return to this system in
\Cref{sec:casestudy}.

\begin{definition}[Three-Valued SSTL Semantics]\label{def:three-valued}\label{def:three-valued-sem}
At each tick the observer carries a pair
$\llbracket\varphi\rrbracket=(\llbracket\varphi\rrbracket^{+},\llbracket\varphi\rrbracket^{-})\in\mathbb{B}^2$,
where $\llbracket\varphi\rrbracket^{+}$ ($\mathit{pos}$) asserts $\varphi$ is
definitively true and $\llbracket\varphi\rrbracket^{-}$ ($\mathit{neg}$) that it is
definitively false; the brackets stress these are verdicts \emph{computed from}
$\varphi$. Both are defined inductively, with base case
$\llbracket p\rrbracket^{+}_{t}=p_t$ and $\llbracket p\rrbracket^{-}_{t}=\neg p_t$ for an
atom $p$. Writing $u$ for the prefix observed so far and $u\cdot\sigma$ for an arbitrary
infinite continuation, a verdict is definitive exactly when $u$ forces the Boolean
outcome regardless of continuation:
\[
  \mathcal{V}(\varphi,t)=
  \begin{cases}
    \textsc{def\_true}  & \text{if } \mathit{pos},
      \quad\text{i.e. } \forall\sigma:\ (u\cdot\sigma,t)\models\varphi,\\
    \textsc{def\_false} & \text{if } \mathit{neg},
      \quad\text{i.e. } \forall\sigma:\ (u\cdot\sigma,t)\not\models\varphi,\\
    \textsc{unknown}    & \text{otherwise,}
  \end{cases}
\]
where $\models$ is the Boolean SSTL semantics of \Cref{def:stl-semantics}.

\textsc{unknown} thus means the formula horizon has not yet
  elapsed, and collapses to a Boolean once it passes.
\end{definition}

\begin{property}[Kleene Mutual-Exclusion Invariant]\label{prop:kleene-invariant}
For every $\varphi$, prefix $u$ and tick $t$, $\mathit{pos}\wedge\mathit{neg}=
\mathit{false}$: never both \textsc{def\_true} and \textsc{def\_false}.
\end{property}

The $(\mathit{pos},\mathit{neg})$ pair and the three-valued domain
$\{\textsc{def\_true},\textsc{def\_false},\textsc{unknown}\}$ carry the same
information, by Property~\ref{prop:kleene-invariant}; using the pair makes logical negation simpler.

Each subformula is observed through its $(\mathit{pos},\mathit{neg})$ pair, so the
Boolean connectives are lifted to act on pairs by Kleene's strong three-valued
logic~\cite{kleene2002mathematical}, in which \textsc{unknown} propagates only when it
could still change the outcome. Conjunction is \textsc{def\_false} as soon as one
conjunct is and \textsc{def\_true} only once both are; disjunction is dual,
\textsc{def\_true} as soon as one disjunct is and \textsc{def\_false} only once both
are; and negation simply \emph{swaps} the two components:

\[
  \llbracket\varphi\wedge\psi\rrbracket
    = \bigl(\,\llbracket\varphi\rrbracket^{+}\wedge\llbracket\psi\rrbracket^{+},\ \
             \llbracket\varphi\rrbracket^{-}\vee  \llbracket\psi\rrbracket^{-}\,\bigr),
\]
\[
  \llbracket\varphi\vee\psi\rrbracket
    = \bigl(\,\llbracket\varphi\rrbracket^{+}\vee  \llbracket\psi\rrbracket^{+},\ \
             \llbracket\varphi\rrbracket^{-}\wedge\llbracket\psi\rrbracket^{-}\,\bigr).
\]
\[
  \llbracket\neg\varphi\rrbracket = \bigl(\,\llbracket\varphi\rrbracket^{-},\ \llbracket\varphi\rrbracket^{+}\,\bigr).
\]
Negation thus maps \textsc{def\_true}/\textsc{def\_false} to each other and
\textsc{unknown} ($\neg\mathit{pos}\wedge\neg\mathit{neg}$) to itself, costing no extra
state and preserving mutual exclusion. The temporal operators all reduce to one
primitive, the bounded until, via the standard identities of \Cref{def:derived-ops}.

\begin{definition}[Derived Operators]\label{def:derived-ops}
\begin{align}
  \diamondsuit_{[a,b]}\varphi &\;\equiv\; \top\,\mathbf{U}_{[a,b]}\,\varphi,
  \label{eq:ev-from-u}\\
  \square_{[a,b]}\varphi      &\;\equiv\; \neg\,\diamondsuit_{[a,b]}\,\neg\varphi
  \;\equiv\; \neg\bigl(\top\,\mathbf{U}_{[a,b]}\,\neg\varphi\bigr).
  \label{eq:al-from-u}
\end{align}
\end{definition}

In this work, we consider only the bounded fragment of SSTL, so we only need to implement the bounded until operator as a synchronous observer. For practicality in online systems, we include one special case where the outer operator may be the unbounded globally operator $\square$, which will be discussed later.

The observer cannot quantify over the infinitely many continuations of
\Cref{def:three-valued}; instead it \emph{enumerates} the bounded window, substituting
a child's verdict pair wherever an atom would sit, which is what lets the operators
nest.

\begin{definition}[Explicit (Enumerated) Verdicts]\label{def:explicit-semantics}
For the bounded until $\psi\,\mathbf{U}_{[a,b]}\,\varphi$ the two verdicts are
\begin{align}
  \llbracket\psi\,\mathbf{U}_{[a,b]}\,\varphi\rrbracket^{+}_{t}
    &= \bigvee_{i=a}^{b}\Bigl(\llbracket\varphi\rrbracket^{+}_{t+i}\ \wedge\
        \bigwedge_{j=0}^{i-1}\llbracket\psi\rrbracket^{+}_{t+j}\Bigr),
    \label{eq:until-pos}\\[2pt]
  \llbracket\psi\,\mathbf{U}_{[a,b]}\,\varphi\rrbracket^{-}_{t}
    &= \underbrace{\bigvee_{k=0}^{b}\Bigl(\llbracket\psi\rrbracket^{-}_{t+k}\ \wedge\
        \bigwedge_{i=a}^{\min(k,b)}\llbracket\varphi\rrbracket^{-}_{t+i}\Bigr)}_{\text{early: precondition fails}}
    \ \vee\
    \underbrace{\bigwedge_{i=a}^{b}\llbracket\varphi\rrbracket^{-}_{t+i}}_{\text{window expiry}}.
    \label{eq:until-neg}
\end{align}
\end{definition}

The positive verdict (\ref{eq:until-pos}) is a \emph{witness}: $\varphi$ holds at some
offset in the window while $\psi$ held on every earlier tick, and closes as soon as that
witness appears. The negative verdict (\ref{eq:until-neg}) closes \textsc{def\_false} in
two ways: \emph{early termination}, where $\psi$ fails before any witness (definitive
immediately, possibly long before $t{+}b$), and \emph{window expiry}, where $\psi$ held
but $\varphi$ was never witnessed across $[a,b]$ (definitive only at the horizon
$t{+}b$). Until one fires, the verdict is \textsc{unknown}.

We illustrate the three closing conditions on a single bounded until
$\psi\,\mathbf{U}_{[2,4]}\,\varphi$ applied to a real signal $x$, binarised by the
postcondition $\varphi\equiv(x\ge 6)$ and the precondition $\psi\equiv(x\ge 0)$; the
response window is the tick range $[2,4]$ and the horizon is $t{=}4$.
\Cref{fig:binarisation} samples the continuous signal at each tick, thresholds it into
$\varphi$ and $\psi$, and reads the verdict from the leaf's $(\mathtt{rp},\mathtt{rn})$
outputs under \texttt{lv6}. In~(a) $x$ crosses the threshold inside the window while
$\psi$ holds, so the verdict closes \textsc{def\_true} at the witness tick $t{=}3$, one
tick before the horizon. In~(b) $\psi$ breaks at $t{=}2$ before any witness; the leaf
detects the broken precondition at $t{=}3$ and closes \textsc{def\_false} there, ahead
of the horizon. In~(c) $\psi$ holds throughout but $x$ never reaches the threshold, so
the absence becomes conclusive only as the window closes, turning \textsc{def\_false}
exactly at the horizon $t{=}4$.

\begin{figure}[htpb]
\centering
\includegraphics[width=0.54\linewidth]{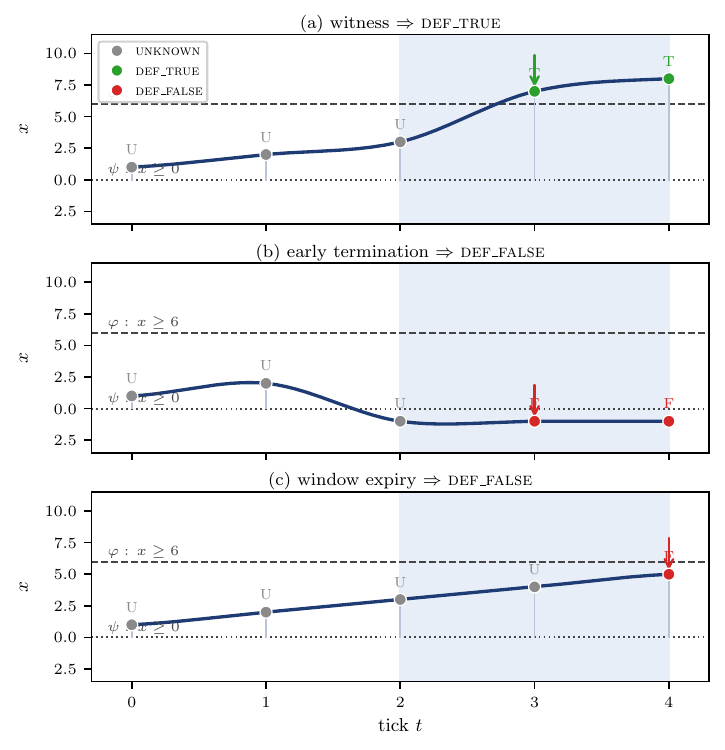}
\caption{A continuous signal $x$ sampled at discrete ticks and binarised into predicates. Dot colour is the verdict and the arrow marks the tick it becomes definitive.}
\label{fig:binarisation}
\end{figure}

\begin{remark}[Proof obligations]\label{rem:obligations}
We propose that an implementation of three-valued streaming semantics for nested SSTL properties is correct if the following obligations are met:
  \begin{enumerate}
    \item the implementation of the bounded until operator emits the explicit verdicts of \Cref{def:explicit-semantics} at each tick.
    \item substituting a child's $(\mathit{pos},\mathit{neg})$ verdict for an atom is sound, so nesting composes
  \end{enumerate}
\end{remark}

Regarding the aforementioned extension to the unbounded globally operator, we note that satisfaction is not possible in online systems, as it requires knowing an infinite future, but violation is possible. Hence, it is useful as a falsification operator for free-running systems.

\section{Motivating Case Study}
\label{sec:casestudy}

We demonstrate observer generation on two case studies. The first is a spring-mass
system following Bellanger et al.~\cite{bellanger2025formally}, who pose a nested
$\square(\diamondsuit)$ formula as a property beyond their non-nested fragment, which
our synthesis verifies directly. The second is an original adaptive cruise control
(ACC) scenario with emergency braking, automotive examples being common safety-critical
CPS benchmarks in the STL domain~\cite{roehm2016stl}. Both plant models are minimal
representative instances written for this paper, not reproductions of a published
benchmark.

\subsection{Spring-Mass System}
The system is a damped second-order plant driven toward a constant reference $r=0.3$.
\begin{align*}
  \mathit{vel}(t) &= 0.7\cdot\mathit{vel}(t{-}1) + 0.09\cdot(r - \mathit{pos}(t{-}1)),\\
  \mathit{pos}(t) &= \mathit{pos}(t{-}1) + \mathit{vel}(t{-}1),
\end{align*}
with $\mathit{pos}(0) = \mathit{vel}(0) = 0$.
For this particular configuration, the spring has a
stable, underdamped response: the position reaches the reference near tick~7, overshoots
to $\approx 0.39$ at tick~11, and converges with decreasing oscillations.

We use four Boolean predicates over the position $x$ and reference $r$, namely three
closeness tolerances and a monotonicity test ($\mathit{mono}$ taken true at $t{=}0$):
\begin{align*}
  \mathit{reach}(x,r)   &\;\triangleq\; |x - r| \le 0.05, &
  \mathit{weps}(x,r)    &\;\triangleq\; |x - r| \le 0.10,\\
  \mathit{wstrict}(x,r) &\;\triangleq\; |x - r| \le 0.02, &
  \mathit{mono}(x)      &\;\triangleq\; x \ge \mathit{pre}(x).
\end{align*}

The predicates are Boolean variables updated each tick. We monitor four nested
properties. \textbf{P1},
$\square_{[2,4]}(\diamondsuit_{[3,5]}\,\mathit{reach})$, is the 2-deep bounded
$\square(\diamondsuit)$ nesting Bellanger et al.~\cite{bellanger2025formally} place
beyond scope (their theoretical example
$\square_{[2,4]}((\diamondsuit_{[3,5]}\varphi)\wedge\psi)$), which our synthesis handles
directly. \textbf{P2},
$\square(\diamondsuit_{[5,20]}(\square_{[1,5]}\,\mathit{weps}))$, asks that from every
tick the system settles within $\pm10\%$ for five consecutive ticks inside twenty;
its unbounded outer $\square$ admits only a violation at runtime. \textbf{P2$'$}
tightens this to $\mathit{wstrict}$ ($\pm2\%$) over a shorter window, which the
underdamped transient cannot sustain, so a violation is detected, showing the
satisfaction results are not specious. \textbf{P3},
$\mathit{mono}\,\mathbf{U}_{[3,15]}(\diamondsuit_{[1,5]}(\square_{[1,3]}\,\mathit{weps}))$,
requires the position to rise monotonically until the inner settling formula is
witnessed, exercising $\mathbf{U}$ as the outermost operator.

\subsection{Car-Following Under Emergency Braking}

We study a second scenario: two vehicles on a straight road
with a discrete-time step $\Delta t = 0.5\,\mathrm{s}$.
The \emph{lead vehicle} drives at $20\,\mathrm{m/s}$ for ticks $0$--$5$, then applies
maximum emergency braking ($-8\,\mathrm{m/s^2}$) until stopped.
The \emph{follower} runs an Adaptive Cruise Control (ACC) targeting a one-second gap (minimum $6\,\mathrm{m}$):
\begin{align*}
  g(t) &= p_1(t) - p_2(t) - 4,\\
  a_2(t) &= \operatorname{clip}\!\left(
    (g(t) - 6 - v_2(t{-}1)) - 0.5\,(v_2(t{-}1) - v_1(t{-}1)),\;
    {-8},\; 2\right),
\end{align*}
where $p_i$, $v_i$ are position and velocity and vehicle length is $4\,\mathrm{m}$.
The vehicles are initialised at a gap of $g(0) = 26\,\mathrm{m}$, both at
$20\,\mathrm{m/s}$.  The follower decelerates in response to the emergency stop,
reaching a stationary gap of $6\,\mathrm{m}$.

Four boolean predicates are evaluated each tick:
\begin{align*}
  \mathit{safe\_gap}(g)    &\;\triangleq\; g \ge 6\,\mathrm{m}, &
  \mathit{adeq\_gap}(g)    &\;\triangleq\; g \ge 10\,\mathrm{m},\\
  \mathit{gap\_settled}(g) &\;\triangleq\; 6 \le g \le 12\,\mathrm{m}, &
  \mathit{vel\_moving}(v_2) &\;\triangleq\; v_2 > 2\,\mathrm{m/s}.
\end{align*}

The four properties mirror the spring-mass set. \textbf{Q1},
$\square_{[2,6]}(\diamondsuit_{[1,4]}\,\mathit{safe\_gap})$, asks the safe gap to recur
within four ticks of every start, which the ACC achieves even under braking.
\textbf{Q2}, $\square(\diamondsuit_{[3,10]}(\square_{[1,4]}\,\mathit{safe\_gap}))$, asks
it to hold for four consecutive ticks from every start; since the ACC keeps $g\ge6\,$m
throughout, the unbounded outer $\square$ holds. \textbf{Q2$'$} tightens this to the
$10\,$m adequate gap, which the emergency stop drives the gap below from tick~11, so the
violation cannot recover and is detected. \textbf{Q3},
$\mathit{vel\_moving}\,\mathbf{U}_{[2,10]}(\diamondsuit_{[1,3]}(\square_{[1,2]}\,\mathit{gap\_settled}))$,
requires the follower to keep moving until the gap settles into $[6,12]\,$m,
again with $\mathbf{U}$ outermost.

\section{N-deep Observer Synthesis}
\label{sec:synthesis}
In many interval logics, nesting is considered complex and excluded even in the
bounded setting. In our synchronous setting, however, intervals are sequences of
discrete integer ticks rather than dense continuous time, so a nested formula unrolls
into a finite collection of non-nested obligations, each watching a fixed window of
ticks. This is the key insight behind our technique: we unroll a nested formula into
these obligations and wire them together. The naive unrolling is exponential in
nesting depth, but most leaves overlap and could be shared by a more careful
construction, which we leave to future work as the present technique already suffices
for our case studies. We synthesise a single reusable Lustre node for the elementary
obligation---the bounded until, as the \texttt{until\_leaf} primitive---and build an
observer for any nested formula by instantiating and wiring copies.

\subsection{Unrolling a nested formula}

We first develop the construction for a single nested chain of bounded operators
applied to one inner predicate,
\[
  \mathit{op}_1^{[a_1,b_1]}\bigl(\mathit{op}_2^{[a_2,b_2]}(\cdots
  (\mathit{op}_N^{[a_N,b_N]}\,\phi)\cdots)\bigr),
  \qquad \mathit{op}_k \in \{\square,\diamondsuit,\mathbf{U}\}.
\]
We later extend this to several predicates combined with the Boolean connectives
$\wedge$ and $\vee$, which the same aggregation scheme handles without changing the
primitive.

Consider the running example $\square_{[2,4]}(\diamondsuit_{[3,5]}\,\phi)$. The
outer operator $\square_{[2,4]}$ does not fix a single tick. Instead, it requires
the inner formula to hold once for each start offset $i_1 \in [2,4]$. 
The formula reduces to one obligation for each starting point, and we call each such obligation a
\emph{leaf}. A leaf fixes a single absolute window. 

Two consequences follow. First, the number of leaves is the product of the window
widths of the outer $N-1$ operators. The innermost operator only sets the window
bounds of each leaf and does not itself branch. Second, every leaf is resolved by
the horizon $H$, which is the sum of the upper bounds of all operators. In the running example, the outer $\square$ produces three
leaves for $i_1 \in \{2,3,4\}$. Each is a copy of $\diamondsuit_{[3,5]}$ shifted to
the absolute window $[i_1{+}3,\,i_1{+}5]$, and the horizon is $H = 4+5 = 9$.

\subsection{The \texttt{until\_leaf} primitive}

A leaf watches a postcondition \texttt{phi} over a fixed response window
$[\mathtt{lo},\mathtt{hi}]$, subject to a precondition \texttt{psi} that is
required to hold from the leaf's start tick $\mathtt{i\_start}$ onward. The
precondition is the ``alive'' signal that a leaf inherits when it sits under a
parent until; for a plain $\square$ or $\diamondsuit$ leaf there is no precondition
and \texttt{psi} is set to \texttt{true}. The leaf returns the three-valued verdict
as a pair \texttt{(rp, rn)} for \textsc{def\_true}/\textsc{def\_false}, generated as the
node in \Cref{lst:until-leaf}.

\begin{figure}[htpb]
\begin{lstlisting}[style=lustre,caption={The \texttt{until\_leaf} primitive (generated Lustre).},label={lst:until-leaf}]
node until_leaf(phi, psi: bool; lo_c, hi_c, i_start_c: int)
  returns (rp, rn: bool);
var n: int; in_win, in_active, alive, found: bool;
let
  n         = 0 -> pre(n) + 1;
  in_win    = (n >= lo_c) and (n <= hi_c);
  in_active = n >= i_start_c;
  alive     = true -> (pre(alive) and (not pre(in_active) or pre(psi)));
  found     = false -> (pre(found) or (in_win and phi and alive));
  rp        = found;
  rn        = not found and (not alive or (n >= hi_c));
tel
\end{lstlisting}
\end{figure}

The counter \texttt{n} records ticks since the leaf started, and \texttt{in\_win}
marks the response window. \texttt{alive} tracks whether \texttt{psi} has held on
every tick since the window became active (always true for $\square$/$\diamondsuit$
leaves, where \texttt{psi}$=$\texttt{true}), and \texttt{found} latches on the first
in-window tick where \texttt{phi} holds while alive. The outputs read off these:
\texttt{rp} (\textsc{def\_true}) is \texttt{found}, and \texttt{rn}
(\textsc{def\_false}) fires once the obligation cannot be met, either because the
precondition broke (\texttt{alive} false) or the window closed with no witness
($\mathtt{n}\ge\mathtt{hi\_c}$). These are the early-termination and window-expiry
cases of \Cref{def:explicit-semantics}; \Cref{lem:leaf-correct} formalises this

This one node covers all three operators. By \Cref{def:derived-ops},
$\diamondsuit_{[a,b]}\,\phi$ is an until with no precondition, so we use the leaf with \texttt{psi}$=$\texttt{true}, where a single in-window witness suffices, and $\square_{[a,b]}\,\phi$ is its dual, obtained through the swap given by
\Cref{def:three-valued} by feeding $\neg\phi$ to the leaf and exchanging \texttt{rp}
and \texttt{rn}.

\subsection{Composing the copies into an observer}

\noindent\begin{minipage}{\linewidth}
\begin{lstlisting}[style=lustre]
node Chain_G2_4_F3_5(phi: bool) returns (def_true,def_false: bool);
var rp_2,rn_2, rp_3,rn_3, rp_4,rn_4: bool; let
  (rp_2, rn_2) = until_leaf(phi, true, 5, 7, 2);
  (rp_3, rn_3) = until_leaf(phi, true, 6, 8, 3);
  (rp_4, rn_4) = until_leaf(phi, true, 7, 9, 4);
  def_true  = rp_2 and rp_3 and rp_4;
  def_false = rn_2 or rn_3 or rn_4;
tel
\end{lstlisting}
\end{minipage}

The observer above is the running example
$\square_{[2,4]}(\diamondsuit_{[3,5]}\,\phi)$: one leaf per start tick
$i\in\{2,3,4\}$, their verdicts combined with \texttt{and} on the positive side and
\texttt{or} on the negative side.

The general pattern follows the same shape. The observer is built bottom-up, one Lustre
node per sub-formula, so the node hierarchy mirrors the formula tree and every node
exposes the same $(\mathtt{def\_true},\mathtt{def\_false})$ pair. Each operator node
aggregates its children's verdicts by the Kleene rule for its connective ($\square$
conjoins positives and disjoins negatives, $\diamondsuit$ dual). Since $\square$ and
$\diamondsuit$ carry no precondition, every child contributes at every tick, so the
aggregation has no timing dependency.

An intermediate $\mathbf{U}$ is the one operator that cannot be directly realised by a simple composition of leaves. A branch $j$ is a valid witness only if the precondition $\psi$ held continuously from the branch's start up to tick $j-1$, and this is known only once tick $j$ is reached.
Selecting on the live $\psi$ stream before then gives the wrong verdict. We instead
track, for each branch, a running alive bit $\mathtt{av} = \texttt{true} \rightarrow
(\texttt{pre}(\mathtt{av}) \wedge (\neg\texttt{pre}(n\ge i_{\mathit{start}}) \vee
\texttt{pre}(\psi)))$, and freeze it at tick $j$ into a latch $\mathtt{latch}_j =
\texttt{true} \rightarrow (\textbf{if } n>j \textbf{ then } \texttt{pre}
(\mathtt{latch}_j) \textbf{ else } \mathtt{av})$. The aggregated verdicts are
$\mathtt{ap} = \bigvee_j (\mathtt{latch}_j \wedge \mathtt{inner\_pos}_j)$ and
$\mathtt{an} = \bigwedge_j (\neg\mathtt{latch}_j \vee \mathtt{inner\_neg}_j)$.

The whole construction is a single recursion, given compactly as \Cref{alg:synth}.

\begin{algorithm}[tbp]
\caption{Bounded-observer synthesis. \textsc{Synth} returns the
$(\mathit{pos},\mathit{neg})$ verdict wires of a formula, recursing on its structure.}
\label{alg:synth}
\begin{algorithmic}[1]
\Procedure{Synth}{formula $\psi$}
  \If{$\psi$ is a negation $\neg\chi$}
    \State \Return \Call{Synth}{$\chi$} with its two verdicts swapped
  \ElsIf{$\psi$ is an innermost $\square$, $\diamondsuit$ or $\mathbf{U}$ over a predicate}
    \State \Return one \textsc{until\_leaf} watching that predicate over the window
  \ElsIf{$\psi$ is $\square$ or $\diamondsuit$ over a sub-formula $\chi$}
    \For{each start tick $i$ in the window}
      \State $(p_i,n_i) \gets \Call{Synth}{\chi}$ shifted to start at $i$
    \EndFor
    \State \Return them combined: $\square$ needs \emph{all} $p_i$ (\emph{any} $n_i$);
            $\diamondsuit$ is the dual
  \ElsIf{$\psi$ is an until $\chi_1\,\mathbf{U}\,\chi_2$ over a sub-formula}
    \State as for $\diamondsuit$ on $\chi_2$, but keep a copy only while its precondition
           $\chi_1$ still holds
    \State \Return the combined verdict
  \ElsIf{$\psi$ is $\chi_1\wedge\chi_2$ or $\chi_1\vee\chi_2$}
    \State \Return \Call{Synth}{$\chi_1$} and \Call{Synth}{$\chi_2$} combined by the Kleene rule
  \EndIf
\EndProcedure
\end{algorithmic}
\end{algorithm}

\subsection{Correctness}
By \Cref{rem:obligations} the synthesised observer is correct once two obligations
are met: (i) each leaf emits the explicit verdicts of \Cref{def:explicit-semantics}
at every tick, and (ii) substituting a child's $(\mathit{pos},\mathit{neg})$ pair for
an atom is sound, so nesting composes. We discharge (i) as \Cref{lem:leaf-correct} and
(ii) as \Cref{lem:substitution}, using \Cref{lem:convergence} as support, and then
combine them by induction on nesting depth in \Cref{thm:soundness-completeness}. The
Kleene mutual-exclusion invariant of Property~\ref{prop:kleene-invariant} is immediate
and needs no separate argument: \Cref{lem:leaf-correct} gives it at each leaf, since
\texttt{rn} carries the conjunct $\neg\mathtt{rp}$, and every aggregation node is a
Kleene $\wedge$/$\vee$ of mutually exclusive child pairs, which is again mutually
exclusive. Full proofs are given in \ifextended\Cref{app:proofs}.\else
\suggest{the extended version.\footnote{\suggest{\url{TODO-EXTENDED-VERSION-URL}}}}\fi

\begin{lemma}[Leaf Correctness; obligation (i)]\label{lem:leaf-correct}
The \texttt{until\_leaf} node of \Cref{lst:until-leaf} realises the explicit
semantics of a single bounded operator. At every tick exactly one of
\textsc{def\_true} ($\mathtt{rp}$), \textsc{def\_false} ($\mathtt{rn}$), or
\textsc{unknown} ($\neg\mathtt{rp}\wedge\neg\mathtt{rn}$) holds, and $\mathtt{rp}$
(resp.\ $\mathtt{rn}$) is true exactly when $\llbracket\cdot\rrbracket^{+}$ (resp.\
$\llbracket\cdot\rrbracket^{-}$) of \Cref{def:explicit-semantics} holds for that
operator over its window.
\end{lemma}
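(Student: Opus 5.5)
The plan is to prove closed-form characterisations of the two state variables \texttt{alive} and \texttt{found} by induction on the tick counter $\mathtt{n}$, and then read $\mathtt{rp}$ and $\mathtt{rn}$ off these invariants. Fix the leaf parameters $\mathtt{lo}=t{+}a$, $\mathtt{hi}=t{+}b$ and $\mathtt{i\_start}=t$, so that tick $\mathtt{n}$ of the leaf corresponds to absolute offset $\mathtt{n}-t$ in \Cref{def:explicit-semantics}. The two invariants are as follows.

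\textbf{(A)} $\mathtt{alive}_n \iff \bigwedge_{j=t}^{n-1}\mathtt{psi}_j$, with the empty conjunction taken to be true. This holds because the clause $\neg\texttt{pre}(\mathtt{in\_active})$ masks every tick before $\mathtt{i\_start}$, and $\mathtt{alive}$ is monotone non-increasing.

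\textbf{(B)} $\mathtt{found}_n \iff \bigvee_{m<n,\ m\in[t+a,t+b]}\bigl(\mathtt{phi}_m\wedge\mathtt{alive}_m\bigr)$.

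Both invariants follow by a routine induction. The base case is the $\texttt{->}$ initialisation, and the step case unfolds $\texttt{pre}$ once.

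Next I substitute (A) into (B). The disjunct at index $m=t+i$ then becomes exactly $\llbracket\varphi\rrbracket^{+}_{t+i}\wedge\bigwedge_{j=0}^{i-1}\llbracket\psi\rrbracket^{+}_{t+j}$, using the atom base case $\llbracket p\rrbracket^{+}=p$. This matches \eqref{eq:until-pos} restricted to witnesses already observed, so $\mathtt{rp}=\mathtt{found}$ gives the positive half.

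One bookkeeping issue needs care. Because of the $\texttt{pre}$ on $\mathtt{found}$, the verdict is reported one tick after the witness. I would state the lemma's "at every tick" as "$\mathtt{rp}$ at tick $n$ equals \eqref{eq:until-pos} evaluated on the prefix up to $n-1$". In other words, the verdicts are prefix-indexed, consistent with the reading of \Cref{def:three-valued}, where $\llbracket\cdot\rrbracket$ over unobserved ticks is false on both sides. I would fix this convention explicitly at the start of the proof, since the paper's own example of \Cref{fig:binarisation}(b) exhibits the same one-tick lag.

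For the negative half, expand $\mathtt{rn}=\neg\mathtt{found}\wedge(\neg\mathtt{alive}\vee n\ge t{+}b)$ into two cases.

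\textbf{Early termination.} Here $\neg\mathtt{alive}_n$ holds, so by (A) there is a least $k$ with $\neg\mathtt{psi}_{t+k}$. Together with $\neg\mathtt{found}$, no witness $t+i$ with $i\le k$ occurred. Hence $\mathtt{phi}$ was false, i.e.\ $\llbracket\varphi\rrbracket^{-}$ held, at every in-window tick up to $\min(k,b)$. That is precisely the first disjunct of \eqref{eq:until-neg}. Conversely, if that disjunct holds on the observed prefix, the pre-$k$ witnesses are excluded, and witnesses after $k$ are impossible because $\mathtt{alive}$ is false there. So $\mathtt{found}$ is false and $\mathtt{alive}$ is false.

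\textbf{Window expiry.} Here $n\ge t{+}b$ with $\mathtt{alive}$ true, and $\neg\mathtt{found}$ together with (A) and (B) yields $\bigwedge_{i=a}^{b}\neg\mathtt{phi}_{t+i}$, the second disjunct.

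Exactly-one-of-three then follows directly. $\mathtt{rn}$ contains the conjunct $\neg\mathtt{found}=\neg\mathtt{rp}$, so $\mathtt{rp}$ and $\mathtt{rn}$ are never both true, and \textsc{unknown} is by definition the remaining case. Finally, instantiating $\mathtt{psi}=\mathtt{true}$ and using \eqref{eq:ev-from-u} and \eqref{eq:al-from-u} covers $\diamondsuit$ and $\square$. For $\square$, negation swaps the pair, and mutual exclusion is preserved under that swap.

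I expect the main obstacle to be the converse direction of the early-termination case, together with the off-by-one alignment. Formula \eqref{eq:until-neg} is stated over the full window, and the $\min(k,b)$ bound interacts with $i<a$: for $k<a$ the inner conjunction is empty, so failure of $\psi$ before the window opens already yields def\_false. I would first check that (A) delivers exactly this, since $\mathtt{alive}$ becomes false at $t+k+1\le t+a$. I would then settle the horizon edge $n=t{+}b$, where the expiry branch may fire one tick before $\mathtt{phi}_{t+b}$ has been folded into $\mathtt{found}$. If that edge genuinely disagrees with the formula, I would either restate the lemma's timing as $n>t{+}b$, or note that the guard should read $\mathtt{n}>\mathtt{hi\_c}$.
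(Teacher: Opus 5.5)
Your route matches the paper's argument made explicit: closed forms for \texttt{alive} and \texttt{found} by induction on $\mathtt{n}$, a case split on $\mathtt{rn}$, and mutual exclusion from the $\neg\mathtt{found}$ conjunct. Invariant (A) is correct. Invariant (B), however, misreads the node, and this is a genuine error.

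In \texttt{found = false -> (pre(found) or (in\_win and phi and alive))} only \texttt{found} itself is delayed; \texttt{in\_win}, \texttt{phi} and \texttt{alive} are current-tick values. Your induction for (B) therefore fails at the step. Unfolding \texttt{pre} once yields $\mathtt{found}_{n-1}$ together with the term at index $n$, whereas (B) needs the term at $n-1$. The correct invariant is $\mathtt{found}_n \iff \bigvee_{m\in[\mathtt{lo},\mathtt{hi}],\ 1\le m\le n}(\mathtt{phi}_m\wedge\mathtt{alive}_m)$. The bound $m\ge 1$ is there because \texttt{->} forces $\mathtt{found}_0$ false, so a witness at tick $0$ is never recorded and the lemma tacitly needs $\mathtt{lo\_c}\ge 1$.

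Consequently $\mathtt{rp}$ rises at the witness tick itself, as the paper's proof says (the latch becomes true at the first in-window tick where \texttt{phi} holds while \texttt{alive} is true) and as \Cref{fig:binarisation}(a) shows.
\begin{itemize}
\item Your restatement ``$\mathtt{rp}_n$ equals \eqref{eq:until-pos} on the prefix up to $n-1$'' is false for the actual node. With \texttt{phi} and \texttt{alive} true at $\mathtt{lo}\ge 1$, $\mathtt{rp}_{\mathtt{lo}}$ holds although the prefix up to $\mathtt{lo}-1$ has no witness.
\item The horizon-edge worry is an artefact of (B). $\mathtt{phi}_{\mathtt{hi}}$ is already folded into \texttt{found} at tick $\mathtt{hi}$, so the guard $\mathtt{n}\ge\mathtt{hi\_c}$ is exactly right. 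Changing it to $\mathtt{n}>\mathtt{hi\_c}$ would only make expiry one tick late.
\end{itemize}

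The one-tick lag you sensed is real, but it lives in \texttt{alive}, which reads \texttt{pre(psi)}. It affects only the early-termination branch of $\mathtt{rn}$; this is the lag shown in \Cref{fig:binarisation}(b). Suppose $\psi$ first fails at $t+k=n<\mathtt{hi}$ with no witness so far. Then the early-termination disjunct of \eqref{eq:until-neg} is already satisfied on the prefix through $n$, yet $\mathtt{rn}_n$ is false and only $\mathtt{rn}_{n+1}$ is true. This is exactly where your early-termination converse (``\texttt{alive} is false there'') breaks under a ``through $n$'' reading.

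So neither uniform convention (prefix through $n$, or through $n-1$) makes the lemma exact. The convention that does is asymmetric: evaluate \eqref{eq:until-pos} and \eqref{eq:until-neg} with $\varphi$ observed through $n$ and $\psi$ through $n-1$, taking unobserved terms as false. Nothing is lost on the positive side, since a witness at $t+i\le n$ needs $\psi$ only up to $t+i-1\le n-1$. With (A) and the corrected (B):
\begin{itemize}
\item $\neg\mathtt{alive}_n\wedge\neg\mathtt{found}_n$ is precisely the early-termination disjunct for some $t+k\le n-1$;
\item $n\ge\mathtt{hi}\wedge\neg\mathtt{found}_n$ gives window expiry or early termination, according to whether $\psi$ held on $[t,\mathtt{hi}-1]$.
\end{itemize}
With that convention and $\mathtt{lo\_c}\ge1$, the rest of your plan goes through, including disjointness and the $\diamondsuit$/$\square$ instantiation. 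The paper's proof leaves this timing implicit, so fixing it explicitly is worthwhile, but it must be this convention rather than the one you proposed.
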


\begin{lemma}[Convergence and Causality]\label{lem:convergence}\label{lem:causality}
Every verdict is monotone, changing at most once from \textsc{unknown} to a definitive value and constant thereafter, and is fixed from the horizon tick $H=\sum_k b_k$ onward where $b_k$ is the upper interval bound of the $k$-th operator $\mathit{op}_k^{[a_k,b_k]}$ in the chain and the sum runs over all $N$ nested operators. Consequently the composition is causal: a parent aggregates only child verdicts that are already definitive when its own verdict becomes definitive.
\end{lemma}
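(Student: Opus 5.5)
The proof goes by structural induction on the observer's node hierarchy, which mirrors the formula tree. The base case is the \texttt{until\_leaf} node of \Cref{lst:until-leaf}. Monotonicity of $\mathtt{rp}$ is immediate: $\mathtt{found}$ is defined as $\mathtt{false} \to (\mathtt{pre}(\mathtt{found}) \vee \dots)$, so once true it stays true. For $\mathtt{rn}$, two facts are needed.
\begin{itemize}
\item $\mathtt{alive}$ is antitone, since it is $\mathtt{true} \to (\mathtt{pre}(\mathtt{alive}) \wedge \dots)$, so $\neg\mathtt{alive}$ latches.
\item The predicate $\mathtt{n} \ge \mathtt{hi\_c}$ latches, because $\mathtt{n}$ is strictly increasing.
\end{itemize}
Hence the disjunction $\neg\mathtt{alive} \vee \mathtt{n} \ge \mathtt{hi\_c}$ latches. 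The remaining conjunct of $\mathtt{rn}$ is $\neg\mathtt{found}$, which could fall. One more observation handles this: once $\mathtt{rn}$ holds, $\mathtt{found}$ can never become true. If $\neg\mathtt{alive}$, every later update of $\mathtt{found}$ requires $\mathtt{alive}$. If $\mathtt{n} \ge \mathtt{hi\_c}$, every later tick has $\mathtt{n} > \mathtt{hi\_c}$, so $\mathtt{in\_win}$ is false. So $\mathtt{rn}$ latches. By \Cref{lem:leaf-correct}, $\mathtt{rp}$ and $\mathtt{rn}$ are exclusive, so the verdict changes at most once, from \textsc{unknown} to a definitive value. It is definitive by tick $\mathtt{hi\_c}$: at that tick either $\mathtt{found}$ holds or $\mathtt{rn}$ fires. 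For a leaf, $\mathtt{hi\_c}$ is the absolute window end, that is, the sum of the outer start offset and the innermost $b_N$.

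For the inductive step, the key point is that Kleene $\wedge$ and $\vee$ are monotone in the information order $\textsc{unknown} \sqsubseteq \textsc{def\_true},\textsc{def\_false}$. The positive output of a $\square$ node is a conjunction of child positives, and its negative output is a disjunction of child negatives; $\diamondsuit$ is dual, and negation swaps the pair. Each child component latches by hypothesis, so these positive Boolean combinations latch too. The until aggregation has outputs $\mathtt{ap}=\bigvee_j(\mathtt{latch}_j\wedge\mathtt{inner\_pos}_j)$ and $\mathtt{an}=\bigwedge_j(\neg\mathtt{latch}_j\vee\mathtt{inner\_neg}_j)$. These need the extra fact that $\mathtt{latch}_j$ is constant for $n>j$. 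It is frozen by construction. After the freeze, $\mathtt{ap}$ and $\mathtt{an}$ are again positive combinations of latching signals.

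The horizon bound follows by the same induction. A child instantiated at start offset $i\le b_1$ is fixed by $i+\sum_{k\ge2} b_k$ by hypothesis. Every $\mathtt{latch}_j$ is frozen by tick $j \le b_1$. So once all children are fixed, the parent's pair is fixed, by $b_1+\sum_{k\ge 2}b_k=H$. Exclusivity is preserved at each node (Property~\ref{prop:kleene-invariant}), so the fixed pair is definitive rather than \textsc{unknown}. To see this, assume all children are definitive. Then Kleene $\wedge$ and $\vee$ return a definitive value. In the until case, each $\mathtt{latch}_j$ is Boolean after tick $j$, so each disjunct and conjunct is decided.

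Causality follows directly. A parent's output is a monotone function of its children's pairs, so the parent can only become definitive after enough children are definitive. Because child values never change afterwards, the parent never revises. I expect the main obstacle to be the until aggregation before the freeze tick $j$. There, $\mathtt{latch}_j$ tracks the live $\mathtt{av}$, which can drop from true to false, and this is not monotone. I would first try to show that $\mathtt{latch}_j\wedge\mathtt{inner\_pos}_j$ cannot be true before tick $j$. The reason is that the inner child's window starts at or after $j$, so by the leaf analysis its positive output is false before then. This needs a small side lemma: a leaf's $\mathtt{rp}$ is false strictly before $\mathtt{lo\_c}$. For the $\mathtt{an}$ conjunct, the early drop of $\mathtt{av}$ only moves $\neg\mathtt{latch}_j$ from false to true, which is the monotone direction, so it is harmless.
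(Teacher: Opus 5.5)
Your proof is correct. It has the same skeleton as the paper's: leaf outputs latch, latching survives the Kleene $\wedge/\vee$ aggregation, and the horizon comes from summing upper bounds through the nesting. It is tighter exactly where the paper's proof is thin.

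\begin{itemize}
\item The paper only treats the window-expiry route to $\mathtt{rn}$ (``if \texttt{found} is still false when the window closes''). You also show that an early $\mathtt{rn}$, fired by $\neg\mathtt{alive}$, latches.
\item More substantively, the paper's appeal to ``aggregating monotone inputs by \texttt{and} and \texttt{or}'' does not cover the intermediate-until aggregator. There $\mathtt{latch}_j$ tracks the antitone $\mathtt{av}$ before tick $j$, so $\mathtt{latch}_j\wedge\mathtt{inner\_pos}_j$ is not a combination of latching signals. Two points of yours close exactly this case: your side lemma that the branch-$j$ child is not yet positive before $j$, and your observation that an early drop of $\mathtt{av}$ moves $\neg\mathtt{latch}_j$ only in the monotone direction.
\item You also show that the pair at $H$ is definitive, not merely constant.
\end{itemize}

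Your causality argument differs from the paper's. The paper argues by timing: every child window closes by $H$, so the child is definitive when read. That really only speaks about tick $H$. You argue from the shape of the Kleene rules: a definitive parent verdict is always supported by definitive child verdicts, which never revise. This also covers parents that commit well before $H$, which is the situation the early-verdict semantics is built for.

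One small repair: your side lemma is stated only for a leaf's $\mathtt{rp}$, and that is too narrow. When the until's child is or contains a $\square$, as in P3 and Q3, its positive output is a leaf's swapped $\mathtt{rn}$. State it instead as: every sub-observer started at offset $s$ emits $(\mathit{false},\mathit{false})$ at all ticks $n<s$. This follows by the same induction. $\mathtt{rp}$ needs $n\ge\mathtt{lo\_c}\ge s$. $\mathtt{rn}$ needs either $n\ge\mathtt{hi\_c}\ge s$ or a failed $\mathtt{alive}$, and the latter cannot happen at any tick $n\le i_{\mathit{start}}$.
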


\begin{lemma}[Compositional Substitution; obligation (ii)]\label{lem:substitution}
Let $\varphi$ have an immediate child $\chi$ occupying the position of an atom. If the
child observer emits the explicit verdicts of $\chi$, that is, its pair
$(\mathit{pos}_\chi,\mathit{neg}_\chi)$ equals
$(\llbracket\chi\rrbracket^{+},\llbracket\chi\rrbracket^{-})$ at every tick, then
substituting that pair for the atom's $(p_t,\neg p_t)$ in the enumerations of
\Cref{def:explicit-semantics} yields exactly
$(\llbracket\varphi\rrbracket^{+},\llbracket\varphi\rrbracket^{-})$.
\end{lemma}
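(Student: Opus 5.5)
The approach is structural induction on the syntax of $\varphi$ at the position occupied by $\chi$. The enumerations of \Cref{def:explicit-semantics}, and the Kleene rules for $\wedge$, $\vee$ and $\neg$, are positive Boolean expressions in the verdict components of the immediate subformulae. Concretely, \eqref{eq:until-pos} and \eqref{eq:until-neg} use only $\wedge$ and $\vee$ over the literals $\llbracket\psi\rrbracket^{\pm}_{t+j}$ and $\llbracket\varphi\rrbracket^{\pm}_{t+i}$. Negation merely permutes the two components. Consequently $(\llbracket\varphi\rrbracket^{+},\llbracket\varphi\rrbracket^{-})$ at tick $t$ is a fixed function $F_\varphi$ of the family of child pairs at the ticks $t,\dots,t+b$, and this function does not care whether those pairs arise from an atom or from a compound $\chi$. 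So the first step is to make this explicit: rewrite each clause of \Cref{def:explicit-semantics} and the Kleene connectives as an operator $F_{\mathrm{op}}$ acting on pair-valued streams. The inductive definition of $\llbracket\cdot\rrbracket$ is then literally $\llbracket\mathrm{op}(\chi_1,\chi_2)\rrbracket = F_{\mathrm{op}}(\llbracket\chi_1\rrbracket,\llbracket\chi_2\rrbracket)$.

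The second step is a case split on the top constructor of $\varphi$: $\neg$, $\wedge$ or $\vee$, and $\mathbf{U}_{[a,b]}$, with $\square$ and $\diamondsuit$ reduced to $\mathbf{U}$ via \Cref{def:derived-ops}. In each case, substituting the atom pair $(p_{t'},\neg p_{t'})$ by $(\mathit{pos}_\chi,\mathit{neg}_\chi)_{t'}$ replaces each argument of $F_{\mathrm{op}}$ at every relevant tick $t'$. By hypothesis these arguments equal $(\llbracket\chi\rrbracket^{+}_{t'},\llbracket\chi\rrbracket^{-}_{t'})$ at every tick. Equal arguments give equal outputs, so the result is $F_{\mathrm{op}}(\dots,\llbracket\chi\rrbracket,\dots)=(\llbracket\varphi\rrbracket^{+},\llbracket\varphi\rrbracket^{-})$. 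The essential point is that the hypothesis holds \emph{at every tick}, including the shifted ticks $t+i$ and $t+j$ that the enumeration reads. For $\square_{[a,b]}\chi$ I would check that the swap identity $\neg(\top\,\mathbf{U}\,\neg\chi)$ commutes with substitution. This holds because $\neg$ acts by swapping components, and swapping commutes with pointwise substitution.

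The third step addresses a subtlety: the base case treats atoms with $\llbracket p\rrbracket^{-}=\neg\llbracket p\rrbracket^{+}$, whereas a compound child may be \textsc{unknown}, with both components false. I must check that \eqref{eq:until-pos} and \eqref{eq:until-neg} never implicitly use complementarity. They do not, since each conjunct names $^{+}$ or $^{-}$ explicitly. Hence the formulas remain the intended Kleene lifts when the child is three-valued, and the identity holds without that assumption.

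The main obstacle is timing rather than algebra. In the enumeration, the child's verdict at $t+i$ is a verdict about the formula anchored at $t+i$. The observer, however, reads child pairs as streams that become definitive only later. I would discharge this using \Cref{lem:convergence}: child verdicts are monotone and fixed by their horizon. So at the parent's evaluation tick, every child pair that the enumeration reads already holds its final value, and the identification of the observer's child wires with $\llbracket\chi\rrbracket_{t'}$ is legitimate. If this step fails, I would restrict the claim to evaluation from the parent's horizon onward, where all child pairs are frozen.
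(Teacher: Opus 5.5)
Your proposal is correct and follows the paper's route. The enumerations of \Cref{def:explicit-semantics} read the child only through its pair at the offsets $t{+}i$, so substituting pairs equal to $\llbracket\chi\rrbracket^{\pm}$ term-for-term reproduces $\llbracket\varphi\rrbracket^{\pm}$, and, like the paper, you invoke \Cref{lem:convergence} for the timing at which the child wires are sampled. Your third step differs usefully: the paper uses causality to argue that the sampled child pair is already definitive and so behaves like a two-valued atom, whereas you observe that the enumerations name the $+$ and $-$ components explicitly and never use complementarity, so the identity holds even for \textsc{unknown} child verdicts. Under the lemma's every-tick hypothesis, the timing step is then not needed for the identity itself.
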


\begin{theorem}[Soundness and Completeness]\label{thm:soundness-completeness}
For every bounded SSTL formula $\varphi$ and every tick, the synthesised observer's
verdict pair $(\mathit{pos},\mathit{neg})$ equals the explicit three-valued
semantics of \Cref{def:explicit-semantics}. That is, $\mathit{pos}$ is true if and
only if $\llbracket\varphi\rrbracket^{+}$, and $\mathit{neg}$ is true if and only if
$\llbracket\varphi\rrbracket^{-}$.
\end{theorem}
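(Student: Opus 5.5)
We will prove the theorem by structural induction on $\varphi$; the induction measure is the nesting depth of temporal operators, with Boolean connectives handled inside each level. The induction hypothesis is the statement itself for every proper subformula: the synthesised sub-observer emits exactly $(\llbracket\chi\rrbracket^{+},\llbracket\chi\rrbracket^{-})$ at every tick. The cases follow the branches of \Cref{alg:synth}, so the proof mirrors the recursion that builds the observer.

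\emph{Base cases.} An atom $p$ is wired as $(p_t,\neg p_t)$, which is the base clause of \Cref{def:three-valued}. An innermost $\square$, $\diamondsuit$ or $\mathbf{U}$ over a predicate becomes a single \texttt{until\_leaf}, and \Cref{lem:leaf-correct} gives the result directly. For $\square$ we also check that the swap construction is correct. By \Cref{def:derived-ops}, feeding $\neg\phi$ and exchanging $(\mathtt{rp},\mathtt{rn})$ realises $\llbracket\neg(\top\,\mathbf{U}\,\neg\phi)\rrbracket$. The Kleene negation rule, which swaps components, then identifies this with the enumerated verdict of $\square_{[a,b]}\phi$.

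\emph{Inductive cases.} Negation, $\wedge$ and $\vee$ are immediate: the observer applies exactly the Kleene pair rules of \Cref{sec:three-valued} to children that are correct by hypothesis.

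For $\square_{[a,b]}\chi$ and $\diamondsuit_{[a,b]}\chi$ with $\chi$ non-atomic, the algorithm creates one copy of the child observer for each start tick $i\in[a,b]$. It then aggregates them by $\bigwedge$/$\bigvee$ on the positive side and the dual on the negative side. We first argue that the shifted copy started at $t+i$ emits $\llbracket\chi\rrbracket_{t+i}$. This holds because the leaf counters are parametrised by absolute offsets $\mathtt{lo},\mathtt{hi},\mathtt{i\_start}$, and shifting those offsets is equivalent to evaluating the child at $t+i$; this is a small time-invariance remark, itself proved by the same induction. Next, \Cref{lem:substitution} says that substituting these pairs for atoms in \Cref{eq:until-pos,eq:until-neg} with $\psi=\top$ yields $\llbracket\varphi\rrbracket$. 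Finally, with $\llbracket\top\rrbracket^{-}=\mathit{false}$, the early-termination disjunct of \Cref{eq:until-neg} vanishes. The formulas then collapse to precisely the $\bigvee/\bigwedge$ aggregation the node computes.

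\emph{The until case (main obstacle).} For $\chi_1\,\mathbf{U}_{[a,b]}\,\chi_2$ with non-atomic $\chi_2$, the node computes $\mathtt{ap}=\bigvee_j(\mathtt{latch}_j\wedge\mathtt{inner\_pos}_j)$, and dually $\mathtt{an}$. We must show two things:
\begin{itemize}
\item $\mathtt{latch}_j$ equals $\bigwedge_{k<j}\llbracket\chi_1\rrbracket^{+}_{t+k}$, the conjunction of definite-true verdicts, once tick $j$ is reached;
\item the expression for $\mathtt{an}$ equals \Cref{eq:until-neg}.
\end{itemize}
For the first point we will prove, by induction on ticks, an invariant of the running bit $\mathtt{av}$ analogous to the \texttt{alive} invariant used in \Cref{lem:leaf-correct}. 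The freeze at $n=j$ then preserves its value afterwards.

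The second point is the delicate step. The aggregate $\bigwedge_j(\neg\mathtt{latch}_j\vee\mathtt{inner\_neg}_j)$ must be rewritten, by distributivity, into the early/expiry disjunction of \Cref{eq:until-neg}. The subtlety is that \Cref{eq:until-neg} needs $\llbracket\chi_1\rrbracket^{-}$, whereas the latch only records failure of $\llbracket\chi_1\rrbracket^{+}$. These coincide only once $\chi_1$'s verdicts are definitive. We will therefore invoke \Cref{lem:convergence}, which gives monotonicity and says every verdict is fixed by its horizon. From it we will argue that whenever the parent reads a latch, the child verdicts involved are already definitive, so $\neg\llbracket\chi_1\rrbracket^{+}$ and $\llbracket\chi_1\rrbracket^{-}$ agree there.

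I would first try to restrict the claim for $\mathtt{an}$ to the ticks where this agreement holds. If the identity fails before the horizon, I would weaken it to equality from the relevant child horizon onward, and rely on monotonicity to rule out premature $\mathit{neg}$ verdicts. Mutual exclusion then follows from the remark preceding the lemmas, which completes the induction.
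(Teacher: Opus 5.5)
Your skeleton is the paper's: induction on nesting depth, \Cref{lem:leaf-correct} at the leaves, and \Cref{lem:substitution} for the inductive step. Your extra checks are also fine: the swap for $\square$, the time-shifted copies, and the collapse of (\ref{eq:until-neg}) when $\psi=\top$. The gap is the until case, which you leave open.

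You plan to equate $\mathtt{an}$ with (\ref{eq:until-neg}) by arguing from \Cref{lem:convergence} that the precondition verdicts behind each $\mathtt{latch}_j$ are definitive when read. Failing that, you would settle for equality from the child horizon onward. That fallback is not the theorem, which asserts equality at every tick. Equality from the horizon on plus monotonicity at best shows that an early $\mathit{neg}$ is eventually justified; it gives neither direction of the equivalence before then.

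The main argument fails as well. The causality clause of \Cref{lem:convergence} concerns the tick at which the parent's own verdict becomes definitive, not the tick at which a latch samples. But $\mathtt{latch}_j$ freezes $\mathtt{av}$ at tick $t{+}j$, and $\mathtt{av}$ reads the precondition as a single stream through $\mathtt{pre}(\psi)$. So the bit standing for offset $j{-}1$ is $\psi$'s value at tick $t{+}j{-}1$. A non-atomic $\chi_1$ anchored there may stay \textsc{unknown} until $t{+}j{-}1{+}H(\chi_1)$; for $\chi_1=\diamondsuit_{[1,1]}p$ the correct bit depends on $p$ at tick $t{+}j$. Since the latch never updates, the error is permanent. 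So for non-atomic $\chi_1$ even your weakened claim is false, whatever causal stream is fed as $\psi$.

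What is missing is the observation that this node presupposes a predicate precondition. Its recurrence consumes one Boolean stream $\psi$, and P3 and Q3 indeed use the atoms $\mathit{mono}$ and $\mathit{vel\_moving}$. For an atom, $\llbracket p\rrbracket^{-}=\neg\llbracket p\rrbracket^{+}$ on every tick that $\mathtt{av}$ reads. So the $\neg\mathit{pos}$ versus $\mathit{neg}$ discrepancy never arises, and convergence is not needed.

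Let $k^*$ be the first offset at which $\psi$ fails ($k^*=\infty$ if none). The bit $\mathtt{av}$ registers this at tick $t{+}k^*{+}1$, one tick late, exactly like the leaf's \texttt{alive} in \Cref{lem:leaf-correct}. Before that tick every latch is true, and both $\mathtt{an}$ and (\ref{eq:until-neg}) are the window-expiry conjunction. From that tick on, $\neg\mathtt{latch}_j$ holds precisely for $j>k^*$, so $\mathtt{an}=\bigwedge_{j\in[a,\min(k^*,b)]}\mathtt{inner\_neg}_j$. This is what (\ref{eq:until-neg}) reduces to, since every other disjunct implies the early-termination disjunct at $k^*$ (or only window expiry remains if $k^*>b$). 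For $\mathtt{ap}$, $\mathtt{inner\_pos}_j$ can first hold at tick $t{+}j$. By then $\mathtt{latch}_j$ already has its final value $\bigwedge_{k<j}\psi_{t+k}$, which gives (\ref{eq:until-pos}).

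The paper never verifies the latch node; it covers $\mathbf{U}$ by \Cref{lem:substitution} like every other operator. Your analysis correctly shows that, for this node, that step is justified only for predicate preconditions. A non-atomic $\chi_1$ would need a different node, with one $\chi_1$ sub-observer per offset and its full verdict pair substituted, rather than a different proof.
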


\paragraph{Boolean connectives between sub-formulas.}

Many useful properties combine several predicates, for example
$\square_{[2,4]}(\diamondsuit_{[3,5]}\,\mathit{rc}\wedge\diamondsuit_{[2,6]}\,
\mathit{weps})$. The Boolean connectives extend the synthesis at no real cost: each is one more node
aggregating its children's $(\mathit{pos},\mathit{neg})$ pairs by the Kleene rules
($A\wedge B$ sets $\mathit{pos}=\mathit{pos}_A\wedge\mathit{pos}_B$,
$\mathit{neg}=\mathit{neg}_A\vee\mathit{neg}_B$; $A\vee B$ dual), leaving
\texttt{until\_leaf} unchanged. Unlike a temporal operator, which multiplies its
child's leaf count by $(b-a+1)$, a connective is additive, with $L(A)+L(B)$ leaves, so
a compound observer is no larger than its component chains wired together.

\section{Unbounded Outer $\square$ via Shift Register}
\label{sec:shift}

The synthesis so far assumes every operator is bounded, monitoring only a fixed horizon
from the start of the run, which is unsuitable for free-running systems. The unbounded
globally operator $\square$ is a common and intuitive specification pattern for such
systems.

A genuinely unbounded $\square(\mathit{inner})$ asserts that the bounded inner formula
holds when started at every tick, indefinitely; conceptually an infinite conjunction
of inner-observer copies started at ticks $0, 1, 2, \dots$. We cannot instantiate
infinitely many observers, but we need not: by \Cref{lem:convergence} each copy resolves
within the inner horizon $H$ (the sum of the inner operators' upper bounds) and is
constant afterwards. So at any tick only the $H{+}1$ copies started within the last $H$
ticks are still accumulating; older copies have issued their verdict, and newer ones do
not yet exist. This bounded working set is what a shift register captures.

Instead of allocating a new observer each tick, we keep a shift register of $H{+}1$
slots that recycles a fixed amount of state across all start ticks. Slot $k$ holds the
running state of the inner copy that started $k$ ticks ago. Each tick every copy ages
by one, so the slots shift forward: slot $k$ inherits slot $k{-}1$'s state from the
previous tick. Slot $0$ is the constant \texttt{false} and serves as the initial state
of each newly started copy, and a copy that reaches slot $H$ has lived its full horizon
and yields its final verdict.

The register is maintained per leaf of the inner observer. Each inner leaf contributes
an accumulator $\mathtt{f\_k}$ at slot $k$ that carries slot $k{-}1$'s value forward and
adds the leaf's $\phi$-evidence whenever the copy is at step $k$ of that leaf's window,
\[
  \mathtt{f\_0} = \mathit{false},\qquad
  \mathtt{f\_k} = \mathit{false} \to
    \mathtt{pre}(\mathtt{f\_{k-1}}) \;\mathbf{or}\;
    (lo \le k \le hi \;\mathbf{and}\; \phi_{\mathit{leaf}}).
\]
At slot $H$ the leaf accumulators are combined by the same tree aggregation as the
bounded inner observer, giving the verdict of the copy that started $H$ ticks ago.

The unbounded $\square$ reuses the bounded recursion of \Cref{alg:synth} with two
changes, summarised in \Cref{alg:shift}: each leaf becomes an $(H{+}1)$-slot recycling
register, and the outer $\square$ becomes a single latch over the inner negative
verdict.

\begin{algorithm}[tbp]
\caption{Unbounded $\square$ by recycling the bounded observer of \Cref{alg:synth}.}
\label{alg:shift}
\begin{algorithmic}[1]
  \State Keep a shift register of $H{+}1$ slots; slot $k$ holds the copy started $k$
         ticks ago.
  \For{each tick}
    \State start a fresh copy in slot $0$ and shift every copy one slot forward, recycling the slot that falls off the end
    \State the copy now in slot $H$ has run its full horizon: combine its leaves as in \Cref{alg:synth} for its verdict
    \State if that verdict is false, $\square$ is false forever (latched); otherwise it stays unknown
  \EndFor
\end{algorithmic}
\end{algorithm}

On an infinite run the only definitive verdict is a latching FALSE: once any copy violates, the property is false forever, and it can never be confirmed TRUE on a finite prefix. The per-window verdicts are thus often more informative.

\section{Visualising Nested Observers}
\label{sec:visualiser}

A nested formula such as $\square_{[2,4]}(\diamondsuit_{[3,5]}(\square_{[1,3]}\,p))$
has no single horizon a designer can reason about by inspection: each outer offset
spawns its own inner window, the windows overlap, and the streaming verdict can
close at a tick that bears no obvious relation to the syntactic bounds (the
early-termination and window-expiry modes of \Cref{def:explicit-semantics}). A
pass/fail value per tick is not enough; the designer needs to see which copy of the
outer operator carried the verdict and why it closed when it did.

We provide a standalone, browser-based visualiser for the synthesised observers.
The designer enters a formula and an input trace; the tool synthesises the observer
in-browser using the same engine as above, runs it over the trace, and renders a
per-copy Gantt chart in which each row is one copy of the outer operator and each
cell carries that copy's three-valued verdict at that tick.

Formula entry accepts the full grammar (bounded $\square/\diamondsuit/\mathbf{U}$,
the Boolean connectives, and $\neg$), with named presets for the case-study properties,
and Boolean input streams are toggled per tick directly on the chart. Reading down a
column gives the aggregate verdict, and reading across a row shows one copy's window
opening, waiting, and closing (\Cref{fig:visualiser}). A \texttt{Compare lv6} action
runs the synthesised Lustre through the actual \texttt{lv6} compiler and overlays its
verdict for cross-checking. The tool is for design-time intuition and debugging, not
verification, and is available as part of this paper's artefact.\footnote{\artefacturl}

\begin{figure}[htpb]
  \centering
  \includegraphics[width=0.5\linewidth]{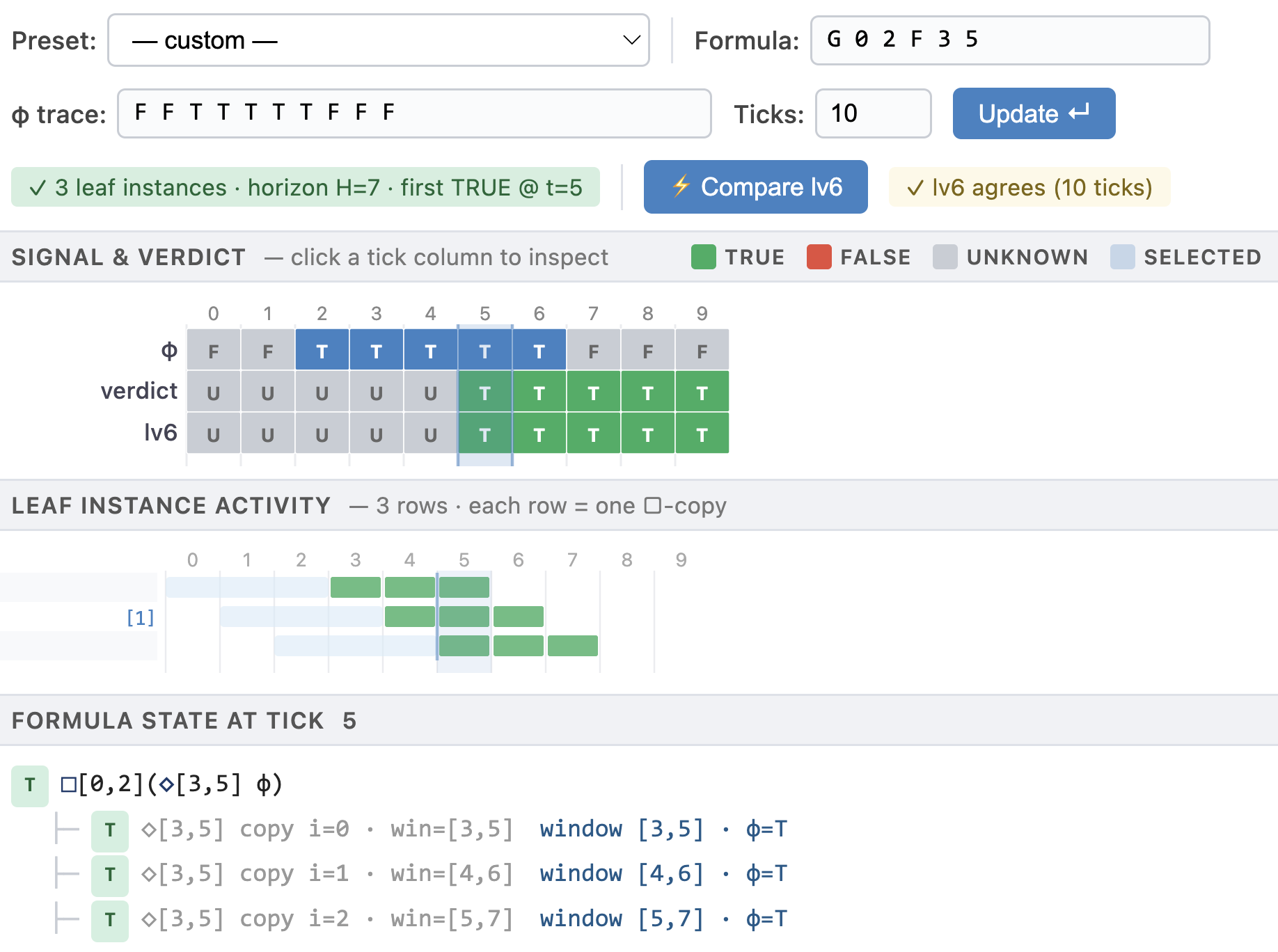}
  \caption{The browser-based visualiser rendering the per-copy three-valued verdict
    of a nested formula over an editable input trace.}
  \label{fig:visualiser}
\end{figure}

\section{Evaluation}
\label{sec:evaluation}
We synthesised all eight observers with \texttt{synthesize.py} and ran them online
over the two case studies with lv6, the Lustre interpreter. Because the
observers are ordinary Lustre, they are also amenable to static verification: as a
cross-check we discharged each property with the model checker Kind2, and the runtime
and static verdicts agree on every property, including the mutual-exclusion invariant
$\mathit{pos}\wedge\mathit{neg}=\mathit{false}$ holding at every tick for every
observer.
\Cref{tab:observers} collects each observer's size and the tick at which it issues
a definitive verdict, and \Cref{tab:trace} gives the spring-mass simulation trace.

\begin{table}[htbp]
\centering
\caption{Observer size, verdict timing, and per-tick execution cost. Leaves $=$ number
  of \texttt{until\_leaf} instances; $H$ $=$ convergence horizon (ticks); state $=$
  Boolean registers}
\label{tab:observers}
\smallskip\small
\begin{tabular}{|l|l|r|r|r|l|r|}
\hline
Prop. & SSTL formula & Leaves & $H$ & State & Verdict & ns/tick \\
\hline
\multicolumn{7}{|l|}{\textit{Spring-mass system}} \\
P1    & $\square_{[2,4]}(\diamondsuit_{[3,5]}\,\mathit{reach})$
      & 3  & 9  & $\approx 10$  & \textbf{T} @7 & 8.2 \\
P2    & $\square(\diamondsuit_{[5,20]}(\square_{[1,5]}\,\mathit{weps}))$
      & 16 & 25 & 416 & none$^\dagger$ & 95.5 \\
P2$'$ & $\square(\diamondsuit_{[3,5]}(\square_{[1,2]}\,\mathit{wstrict}))$
      & 3  & 7  & 24  & \textbf{F} @7$^\dagger$ & 6.7 \\
P3    & $\mathit{mono}\,\mathbf{U}_{[3,15]}(\diamondsuit_{[1,5]}(\square_{[1,3]}\,\mathit{weps}))$
      & 65 & 23 & $\approx 143$ & \textbf{T} @8 & 206.5 \\
\hline
\multicolumn{7}{|l|}{\textit{Car-following (ACC)}} \\
Q1    & $\square_{[2,6]}(\diamondsuit_{[1,4]}\,\mathit{safe\_gap})$
      & 5  & 10 & $\approx 10$  & \textbf{T} @7 & 13.1 \\
Q2    & $\square(\diamondsuit_{[3,10]}(\square_{[1,4]}\,\mathit{safe\_gap}))$
      & 8  & 14 & 120 & none$^\dagger$ & 25.8 \\
Q2$'$ & $\square(\diamondsuit_{[1,4]}(\square_{[1,2]}\,\mathit{adeq\_gap}))$
      & 4  & 6  & 28  & \textbf{F} @14$^\dagger$ & 7.4 \\
Q3    & $\mathit{vel\_moving}\,\mathbf{U}_{[2,10]}(\diamondsuit_{[1,3]}(\square_{[1,2]}\,\mathit{gap\_settled}))$
      & 27 & 15 & $\approx 63$ & \textbf{T} @12 & 85.4 \\
\hline
\multicolumn{7}{|l|}{\footnotesize $^\dagger$ Unbounded $\square$ has no finite formula
  horizon}
\end{tabular}
\end{table}

\begin{table}[htbp]
\centering
\caption{Spring-mass trace. Predicates: reach ($rc$), within\_eps
  ($we$), within\_strict ($ws$), monotone ($mn$). Bold marks
  first transitions.}
\label{tab:trace}
\smallskip\small
\begin{tabular}{r@{\hspace{4pt}}r@{\hspace{6pt}}cccc@{\hspace{6pt}}cccc}
\hline
$t$ & $\mathit{pos}$ & $rc$ & $we$ & $ws$ & $mn$ & P1 & P2 & P2$'$ & P3 \\
\hline
 0 & 0.000 & F & F & F & T & U & U & U & U \\
 5 & 0.190 & F & F & F & T & U & U & U & U \\
 6 & 0.247 & F & T & F & T & U & U & U & U \\
 7 & 0.297 & T & T & T & T & \textbf{T} & U & \textbf{F} & U \\
 8 & 0.337 & T & T & F & T & T & U & F & \textbf{T} \\
11 & 0.387 & F & T & F & T & T & U & F & T \\
12 & 0.384 & F & T & F & F & T & U & F & T \\
\hline
\end{tabular}
\end{table}

The P2 and P2$'$ columns report the per-window diagnostic of the
unbounded $\square$: \textbf{F} at the tick a window closes violated, and U while no
window has yet violated. An unbounded $\square$ can only ever be falsified, never
confirmed, on a finite prefix, so these columns never reach \textbf{T}.

To assess runtime feasibility, we measured the per-tick cost of each observer compiled
to C through the \texttt{lv6} back end. Since the observer is a fixed dataflow graph
evaluated in full each tick, the per-tick cost is constant; we report the minimum over
five $3\times10^{7}$-tick runs, as pure observer compute (no I/O or interpreter
overhead) on a 2020 MacBook Pro, in the final (\textsc{ns/tick}) column of
\Cref{tab:observers}. Every observer runs in well under a microsecond per tick
(spring-mass averaging $79.2$\,ns, car-following $32.9$\,ns). The cost tracks leaf count
rather than nesting depth, so the deeply unrolled $\mathbf{U}$-rooted chains (P3, Q3)
are the most expensive while tight bounded properties cost only a few nanoseconds. Even the slowest
observer leaves a wide margin below a typical millisecond-scale control period,
confirming that the synthesised monitors are cheap enough to run online.

The central claim, that the three-valued semantics delivers a definitive verdict
before the formula horizon, holds across both case studies: every bounded property
settles strictly before its horizon $H$ (\Cref{tab:observers}). The margin is largest
for the most deeply nested property, P3, which settles \textbf{T} at tick~$8$ against a
horizon of $H=23$, fifteen ticks early.

The unbounded-$\square$ properties behave as expected for a falsification-only
operator. The genuinely safe properties P2 and Q2 record no window violation over the
entire run, while the deliberately tight variants are caught: P2$'$ latches false at
tick~$7$ when the strict settling window first closes unsatisfied, and Q2$'$ latches
false at tick~$14$, flagging the gap dropping below the adequate threshold during the
braking transient. In each case the monitor reports the violation at the tick the
offending window closes, never later. All verdicts shown were produced by running the
synthesised observers under the \texttt{lv6} Lustre interpreter.

\section{Related Work}
\label{sec:related}
The closest precursor is Bellanger et al.~\cite{bellanger2025formally}, who synthesise Lustre nodes for non-nested STL operators using Kleene three-valued logic and verify them with Kind2. They share our synchronous co-compilation setting and
verdict structure, but explicitly exclude nested operators such as
$\square_{[a,b]}(\diamondsuit_{[c,d]}\,\phi)$. Their observers are moreover
oriented toward static verification as Kind2 contracts, whereas we treat the observer
primarily as an executable specification monitored online, with static checking a
by-product. Our
\texttt{until\_leaf} primitive supports arbitrary $N$-deep nesting of $\square$,
$\diamondsuit$, and $\mathbf{U}$, and the shift-register construction extends the outermost operator to a truly unbounded $\square$.

Roehm et al.~\cite{roehm2016stl} perform STL model checking on hybrid automata, which is sound but incomplete.
Bae and Lee~\cite{bae2019bounded} give refutationally complete bounded STL model checking by syntactic separation, supporting arbitrary nesting, with an SMT encoding size that is exponential in the until-nesting depth $h$ and the signal variability bound $k$. STLmc~\cite{yu2022stlmc} builds on this engine, reducing robust STL model checking to the Boolean problem of~\cite{bae2019bounded}. These SMT tools verify a separately supplied model, while our observers are co-compiled with the Lustre source and produce tick-by-tick verdicts at runtime with no modelling gap.
Donz\'e et al.'s efficient robust monitoring~\cite{donze2013efficient} and
S-TaLiRo~\cite{annpureddy2011s} analyse temporal properties over time-series data, the
former computing robustness streams and the latter performing stochastic falsification. Deshmukh et
al.~\cite{deshmukh2017robust} give three-valued online STL monitoring via robustness degree, where the third value arises from a quantitative distance to violation; this
is orthogonal to our causal three-valued semantics, where Unknown means the formula
horizon has not yet elapsed. Both tools operate on pre-recorded traces, while our observers run live.

Reinbacher et
al.~\cite{reinbacher2012onchip,reinbacher2014runtime} synthesise past-time MTL
observers as reconfigurable FPGA blocks. Rather than unrolling a window of
width~$b$ in the fashion that we do, they maintain a garbage-collected
list of timestamp pairs recording the intervals over which a subformula
held, giving a \textit{Since} observer doubly logarithmic in time. Their logic is strictly
past-time, so our bounded-\emph{future} fragment lies outside it. However, their encoding is more memory efficient.
Jak\v{s}i\'c et al.~\cite{jaksic2015fpga} do handle bounded-future STL over
discrete time, compositionally and with nesting, as monitors on an FPGA that serve both monitoring and verification. However, they monitor by rewriting~$\varphi$ into an equisatisfiable past formula whose verdict for tick~$t$ is emitted at $t+H(\varphi)$, for a horizon~$H$. Their verdicts are two-valued, with no early termination.
Ulus~\cite{ulus2026sequential} likewise restricts to the past fragment, and
deliberately: a bounded-future verdict can only be produced by delaying the output (as we do, but only for as long as
  the trace leaves the verdict open), and
one arriving late ``may be too late'' when it feeds a reactive controller.
Although we share the bounded fragment with these works, none of them produces a monitor in the \emph{system's own language} that is simultaneously an executable observer and a model-checking obligation.

Closest in spirit to our online setting is RTAMT~\cite{yamaguchi2024rtamt}, a
library for online monitoring of STL that does run inside live
systems: it exposes a Python API with bindings for ROS and MATLAB/Simulink, and handles
bounded-future operators. Moreover, it supports robustness measurements for handling discretisation of continuous signals.
In contrast, our observers are co-compiled with the system in the same synchronous language, which also makes them amenable to static verification.

\section{Conclusion}
\label{sec:conclusion}

We presented a systematic synthesis of synchronous Lustre observers for arbitrarily
deep nested SSTL formulas. The construction rests on two main ideas: a
single \texttt{until\_leaf} primitive of which $\square$ and $\diamondsuit$ are
degenerate cases, and a shift-register construction that monitors a truly
unbounded outer $\square$. We proved the observers sound and
complete against the explicit three-valued semantics, and validated them on two case
studies with lv6 and Kind2. Additionally, we provide a web-based GUI for visualising nested SSTL formulae.

The approach has three main limitations, each pointing to future work. First, the leaf
count is the product of the window widths, so large windows (for example
$\diamondsuit_{[0,100]}$) cause the observer to blow up; sharing the overlapping leaves
or a symbolic enumeration would reduce this. Sibling leaves of one operator differ only in their window offset and read
the same child stream, so a leaf whose window is contained in another's is redundant, and the two could share state. Second, both case studies model the plant inside Lustre as a closed system, so
the observer reads internally generated signals rather than sampling an external
environment. Monitoring a genuine open system, one whose inputs arrive from sensors or
a co-simulated plant at runtime, would make the observers deployable as true online
monitors, and is a natural next step in the Lustre setting. Third, we are currently operating only within the sampled discretised domain: supporting a continuous STL robustness semantics requires the robustness degrees explored in prior work~\cite{fainekos2009robustness}.

\ifextended
\appendix
\section{Proofs}
\label{app:proofs}
\begin{proof}[of \Cref{lem:leaf-correct}]
Recall \Cref{lst:until-leaf}. The counter \texttt{n} is the number of
ticks since the leaf started. The window is $[\mathtt{lo\_c},\mathtt{hi\_c}]$, and \texttt{i\_start\_c} is the tick from which the precondition \texttt{psi} must hold.
We read each equation as a statement about the current tick \texttt{n}.
First, \texttt{alive} records whether the precondition has held since activation: it
is true at tick \texttt{n} exactly when \texttt{psi} was true at every tick from
\texttt{i\_start\_c} up to $\mathtt{n}{-}1$. This follows by induction on \texttt{n},
since \texttt{alive} starts true and its recurrence can only turn it false once the
leaf is active and \texttt{psi} failed on the previous tick. Second, \texttt{found}
is a latch: it becomes true at the first in-window tick where \texttt{phi} holds
while \texttt{alive} is true, and stays true thereafter. So \texttt{found} is true
exactly when a witness for \texttt{phi} has appeared inside the window during a
stretch where the precondition held throughout.

The verdict then splits into three cases at tick \texttt{n}. First, if a witness has
appeared, then \texttt{found}, and hence \texttt{rp}, is true. This is exactly
$\llbracket\cdot\rrbracket^{+}$, the disjunction over window positions of a witnessed
\texttt{phi} with \texttt{psi} maintained. Second, if no witness has appeared and
either the precondition has broken (\texttt{not alive}) or the window has closed
($\mathtt{n}\ge\mathtt{hi\_c}$), then \texttt{rn} is true. These are the
early-termination and window-expiry disjuncts of $\llbracket\cdot\rrbracket^{-}$.
Third, if no witness has appeared but the window is still open and the precondition
intact, both outputs are false and the verdict is \textsc{unknown}. The cases are
exhaustive. They are disjoint because \texttt{rn} carries the condition
\texttt{not found}, which is $\neg\mathtt{rp}$, so \texttt{rp} and \texttt{rn} are never true together. This gives agreement
with \Cref{def:explicit-semantics}.
\end{proof}

\begin{proof}[of \Cref{lem:convergence}]
A single leaf resolves by the close of its window. Its latch \texttt{found} stays
true once true, so \texttt{rp} is monotone, and if \texttt{found} is still false when
the window closes ($\mathtt{n}\ge\mathtt{hi\_c}$), then \texttt{rn} becomes true and
stays true, since \texttt{phi} can no longer be witnessed in a closed window. Each
leaf therefore changes at most once and is constant once its window has closed.
Aggregating monotone, eventually-constant inputs by \texttt{and} and \texttt{or} is
again monotone and eventually constant, and the last leaf to close does so by the
horizon $H=\sum_k b_k$, so the whole observer is fixed by $H$.
Causality follows. A child instance also closes its window by $H$, since its window bounds are sums of the enclosing operators' upper bounds, so the child verdict is already definitive when the parent reads it.
\end{proof}

\begin{proof}[of \Cref{lem:substitution}]
The enumerations of \Cref{def:explicit-semantics} reference the child only through its
verdict pair at the sampled offsets $t{+}i$. By \Cref{lem:causality} each such child
instance has closed its window and reached a definitive verdict by the tick the parent
samples it, so $\mathit{pos}_\chi$ and $\mathit{neg}_\chi$ at those offsets coincide
with the Boolean value of $\chi$ on every continuation, which is the property the
enumeration requires of its argument. Replacing $(p_t,\neg p_t)$ by
$(\mathit{pos}_\chi,\mathit{neg}_\chi)$ term-for-term therefore reproduces
$\llbracket\varphi\rrbracket^{\pm}$ exactly.
\end{proof}

\begin{proof}[of \Cref{thm:soundness-completeness}]
By induction on nesting depth, discharging the two obligations of
\Cref{rem:obligations}. At depth $1$ the formula is a single bounded operator over
atoms, and \Cref{lem:leaf-correct} (obligation (i)) gives observer equals explicit
semantics. Assume the claim for every observer of depth less than $d$, and let
$\varphi$ have depth $d$ with immediate children of smaller depth. By the hypothesis
each child observer equals $\llbracket\cdot\rrbracket^{\pm}$ of its subformula, so
\Cref{lem:substitution} (obligation (ii)) applies and the parent verdict equals
$\llbracket\varphi\rrbracket^{\pm}$. The two sides stay mutually exclusive throughout,
and by \Cref{lem:convergence} the verdict is
definitive by at latest tick $H$.
\end{proof}

\fi

\bibliographystyle{splncs04}
\bibliography{refs}

\end{document}